\renewcommand{\vec}[1]{{\bf #1}}
\providecommand{\keywords}[1]{\textbf{\textit{Index terms---}} #1}
\renewcommand{\sec}[1]{Sec.~(\ref{sec:#1})}
\newcommand{\fig}[1]{Fig.~(\ref{fig:#1})}
\newcommand{\eq}[1]{Eq.~(\ref{eq:#1})}
\renewcommand{\a}{\alpha}
\newcommand{\g}{\gamma}
\newcommand{\e}{\epsilon}
\renewcommand{\b}{\beta}
\renewcommand{\l}{\lambda}
\newcommand{\sg}{\sigma}
\renewcommand{\d}{\partial}
\renewcommand{\k}{\kappa}
\renewcommand{\t}{\theta}
\renewcommand{\Re}[1]{\mathfrak{Re} \lcom #1 \rcom}
\newcommand{\f}{\mathfrak{f}}
\newcommand{\R}{\mathbb{R}}
\newcommand{\lcom}{\left[}
\newcommand{\rcom}{\right]}
\newcommand{\set}[1]{\lbrace #1 \rbrace}
\newcommand{\ra}{\rightarrow}
\newcommand{\lgans}{\textquotedblleft}
\newcommand{\nn}{\nonumber}
\numberwithin{equation}{section}      
\theoremstyle{plain}
\newtheorem{thm}{Theorem}[section]  
\newtheorem{lem}[thm]{Lemma}
\theoremstyle{definition}
\newtheorem{defn}{Definition}[section]
\begin{document}
\title{Uncertainty Estimates in the Heston Model via Fisher Information}

\author{Oliver Pfante}
\author{Nils Bertschinger}
\affil{Systemic Risk Group, Frankfurt Institute for Advanced Studies, Routh-Mofang-Stra{\ss}e 1, 60483 Frankfurt am Main}

\date{\today}
\maketitle
\begin{abstract}
    We address the information content of European option prices about
    volatility in terms of the Fisher information matrix. We assume
    that observed option prices are centred on the theoretical price provided
    by Heston's model disturbed by additive Gaussian noise. We fit the likelihood 
    function on the components of the VIX, i.e., near- and next-term put and call 
    options on the S{\&}P 500 with more than 23 days and less than 37 days to expiration and 
    non-vanishing bid, and compute their Fisher information matrices from the
    Greeks in the Heston model. We find that 
    option prices allow reliable estimates of volatility with negligible uncertainty 
    as long as volatility is large enough. Interestingly, if volatility drops below 
   a critical value, inferences from option prices become impossible because Vega, 
   the derivative of a European option w.r.t. volatility, nearly vanishes.
\end{abstract}

\keywords{Fisher information; Stochastic Volatility; Greeks}

\section{Introduction}

Volatility of stock processes is a highly volatile time-process itself. 
This insight led to the introduction of volatility
indices like the VIX (1993) and its off-springs, based on the work
\cite{Brenner1989,Brenner1993}, which make volatility a
trademark in its own right subject to similar stochastic movements as
stock prices. According to this view volatility seems to be responsible
for several statistical properties of observed stock price
processes. In particular, volatility clustering, i.e., large
fluctuations are commonly followed by other large fluctuations and
similarly for small changes \cite{Bouchaud2003}. Another feature is
that, in clear contrast with price changes which show negligible
autocorrelations, volatility autocorrelation is still significant for
time lags longer than one year
\cite{Perello2008,Muzy2000,Bouchaud2003,Lo1991,Ding1993,LeBaron2001}. Additionally,
there exists the so-called leverage effect, i.e., much shorter (few
weeks) negative cross-correlation between current price change and
future volatility
\cite{Bouchaud2001,Bouchaud2003,Black1973,Bollerslev2006}.\\

In stochastic volatility models, volatility is considered as a hidden process which
can only be observed indirectly via its effect on stock price
dynamics. Thus, in practice it has to be inferred from market data. In
a previous paper \cite{PfanBerti2016}, we have shown that daily stock returns provide only
very limited information about volatility. Thus, there are intrinsic
limits on how precisely volatility can be recovered from market
data. Here, we consider a related question when recovering volatility
from option price data. To our knowledge, there are no estimates of 
the uncertainty associated with volatility when inferred from option prices.\\
We tackle this question in terms of Fisher information which
quantifies the uncertainty of a maximum likelihood estimate by the
curvature of the likelihood function around its maximum. A shallow
maximum would have low information as the parameters are only weakly
determined; while a sharply peaked maximum would have high
information. Further, the famous Cramer-Rao bound links Fisher
information to the minimum variance of an unbiased estimator, thus
providing fundamental limits on the reliability of parameter
estimation. \\
In the present context, among the vast family of stochastic volatility
models, we focus on Heston's model. It successfully models statistical
properties of stock returns \cite{Christie1982,Bouchaud2003,Dragulescu2002,Mandelbrot1997}
and the implied volatility surface \cite{Forde2012,GatheralVolSur,Janek2011,Poon2009, Jacquier2013,Jacquier2016,Mazzon2015}. We refer also to the introduction
of \cite{roughHeston} for a discussion of the empirical properties of Heston's model. There,
the authors derive an analytical expression for the characteristic function of an extension
of the Heston model in terms of fractional Brownian motions. Most important, along with Heston's original paper \cite{Heston1993} came also an analytical expression of the European option prices in terms of characteristic functions and many algorithms have been proposed for their computation -- see also Rouah's book \cite{HestonExt} and the references therein for a thorough discussion of these algorithms.\\
We illustrate our results on an option portfolio which underlies the
VIX index and compute the absolute and relative uncertainties of S\&P
500 volatility over the last decade. Overall, we find that in contrast
to stock returns alone \cite{PfanBerti2016}, option prices provide substantial information
about volatility making inferred volatility a precise
estimator. Interestingly, very small volatilities are most difficult
to estimate as Vega almost vanishes below a critical value. This in
turn leads to huge relative errors in small inferred volatilities. The
VIX, being a variance swap on the average volatility over 30 days, is
much more stable with a relative uncertainty never exceeding 3\% over
the considered data set.\\

The paper is structured as follows: \sec{Heston} introduces Heston's
model and associated pricing formulas. Further, we explain how the
fractional fast Fourier transform allows an efficient computation of
the Heston Greeks. In \sec{FisherInfo} we state the formal definition
and important properties of Fisher information. \sec{Surface}
illustrates how the Fisher information varies w.r.t. strike and maturity
in the Heston model. Further, we discuss the role of Vega on the
reliability of volatility estimation. Finally, in \sec{VIX} we compute
the Fisher information for the volatility of the S\&P 500
index as well as the VIX index.

\section{Pricing options in Heston's Model}
\label{sec:Heston}

\subsection{Heston's Stochastic Volatility}
Introducing Heston's model for pricing options we follow \cite{HestonExt}. The Heston model assumes that the underlying stock price, $S_t$, follows a Black-Scholes-type stochastic process, but with a stochastic variance $v_t$ that follows a Cox, Ingersoll, and Ross process. Hence, the Heston model is represented by the bivariate system of stochastic differential equations (SDEs)
\begin{align} \label{eq:CoxIngersoll}
dS_t &= \mu S_t dt + \sqrt{v_t}S_t dW_{1,t} \nn \\
dv_t &= \k(\t - v_t) dt + \sg \sqrt{v_t} dW_{2,t} 
\end{align}
with the instantaneous correlation $dW_{1,t}dW_{2,t} = \rho dt$ of the two Brownian motions. The parameters of the model are
\begin{displaymath}
\begin{tabular}{ll}
\text{drift of the stock} & $\mu$ \\ 
\text{relaxation parameter} & $\k > 0$ \\ 
\text{long-term mean of the variance} & $\t > 0$ \\ 
\text{leverage-effect parameter} & $\rho \in [0,1]$ \\ 
\text{volatility of the variance} & $\sg > 0$\, . \\ 
\end{tabular} 
\end{displaymath}
Furthermore, the price at time $t$ of a zero-coupon bond paying 1\$ at maturity $t + \tau$ is 
\[
P(t, t+\tau) = e^{-\tau r}
\]
with constant interest rate $r$. Neglecting volatility risk premium, change of measure yields the log-price $x_t = \log S_t$ and variance $v_t$
\begin{align*}
d x_t &= \left(r - \frac{1}{2} v_t \right) dt + \sqrt{v_t} d \tilde W_{1,t} \\
d v_t &= \k(\t - v_t) dt + \sg \sqrt{v_t} d \tilde W_{2,t}
\end{align*}
w.r.t. to the risk neutral measure $\mathbb{Q}$, see \cite{HestonExt} or Heston's original work \cite{Heston1993} for a detailed derivation. If we include a continuous dividend yield $q$, the time drift of the log-price becomes $r - q - 1/2 v_t$.

\subsection{European options}
We present the price for a European call and put option in Heston's model in the formulation of Carr and Madan \cite{Carr1999}. Henceforth, we abbreviate the log-price $x_t = \log S_t$ at time $t$ simply by $x$ and similarly the variance $v_t$ at time $t$ by $v$. We only consider the characteristic function $f$ of the cumulative distribution $P^{\mathbb{Q}}(X_{\tau} > \log K)$ w.r.t. the risk-neutral probability, that is
\[
P^{\mathbb{Q}}(X_{\tau} > k) = \dfrac{1}{2} + \dfrac{1}{\pi} \int_0 ^{\infty} \Re{\dfrac{e^{- i \phi k}f(\phi, x, v, \tau)}{i\phi}} \, d\phi \, .
\]
with
\[
f(\phi, x, v, \tau) = e^{(C(\phi, \tau) + D(\phi, \tau)v + i\phi x)} \, 
\]
and the logarithmic strike $k = \log K$. The little Heston Trap formulation \cite{Albrecher2007} yields
\begin{align*}
C(\phi, \tau) &= i(r-q)\phi \tau + \dfrac{\k \t}{\sg^2} \lcom (Q -d)\tau - 2 \log \left( \dfrac{1 - c e^{-d \tau}}{1 - c}\right) \rcom \\
D(\phi, \tau) &= \dfrac{Q - d}{\sg^2}\left( \dfrac{1 - e^{-d\tau}}{1 - ce^{-d\tau}} \right)
\end{align*}
with 
\begin{align*}
c &= \dfrac{Q - d}{Q + d} \\
d &= \sqrt{Q^2 + \sg^2(i\phi + \phi^2)} \\
Q &= \k - i \rho \sg \phi \, .
\end{align*}
We introduce
\begin{equation} \label{eq:general}
E(\e, x, v, r, \k, \t, \rho, \sg, k, \tau) = \dfrac{e^{-\e \a k}}{\pi} \int_0^{\infty}\Re{e^{-ik\phi} \hat e(\e, \phi, x, v, \tau)} \, d\phi
\end{equation}
with
\[
\hat e(\e, \phi, x, v, \tau) = \dfrac{e^{-r \tau}f(\phi - i(\e \a + 1), x, v)}{(\e \a)^2 + \e \a - \phi^2 + i \phi (2\e \a + 1) }\, .
\]
where $\a > 0$ is a positive damping factor which can be chosen according to an optimization scheme outlined in \cite{Carr1999} and $\e \in \set{1, -1}$. We obtain the European call $C(x, v, r, \k, \t, \rho, \sg, k, \tau)$ and put $P(x, v, r, \k, \t, \rho, \sg, k, \tau)$ price, respectively, via
\begin{align}  \label{eq:carr}
C(x, v, r, \k, \t, \rho, \sg, k, \tau) &= E( 1, x, v, r, \k, \t, \rho, \sg, k, \tau) \nn \\
P(x, v, r, \k, \t, \rho, \sg, k, \tau) &= E(-1, x, v, r, \k, \t, \rho, \sg, k, \tau) \, .
\end{align}
We refer to \cite{Carr1999} or the third chapter in Rouah's book \cite{HestonExt} for a derivation of the formulae \eq{carr}. Furthermore, we have put-call parity, see \cite{HestonExt},
\begin{equation} \label{eq:putcallParity}
P(x, v, r, \k, \t, \rho, \sg, k, \tau) = C(x, v, r, \k, \t, \rho, \sg, k, \tau) + e^k e^{-r\tau} - e^x e^{-q\tau} \, .
\end{equation}
The Heston Greeks in terms of the Carr-Madan formulation read
\begin{align} \label{eq:greek}
\dfrac{ \partial }{\partial \g}E(\e, x, & v, r, \k, \t, \rho, \sg, k, \tau)  \nn \\
&= \dfrac{e^{-\e \a k}}{\pi} \int_0^{\infty}\Re{e^{-ik\phi} \f_\g(\phi - i(\e \a + 1), x, v, \tau)  \hat e (\e, \phi, x, v, \tau)} \, d\phi
\end{align}
where $\f_\g$ are different functions for $\g \in \set{ \sg_0,  \k, \sqrt{\t}, \rho, \sg}$ and $\sg_0 = \sqrt{v}$ denotes the volatility.

\subsection{Fractional Fourier Transform}
Since the call and put price in \eq{carr} is expressed via a single Fourier integral we can apply a Fractional Fast Fourier Transform to achieve 
a simultaneous computation of the call and put prices \eq{carr} for various strikes. We follow the outline in \cite{HestonExt}. We approximate the Fourier integral in \eq{general} by Simpson's integration scheme over a truncated integration domain for $\phi$, using $N$ equidistant points
\[
\phi_j = j \eta \qquad \text{for } j = 0, \ldots, N-1 
\]
where  $\eta$ is the increment. Simpson's rule approximates the integral \eq{greek} as
\begin{align} \label{eq:simpson}
\dfrac{\partial}{\partial \g} E(& \e, x,   v, r, \k, \t, \rho, \sg, k, \tau) \nn \\
& \approx  \dfrac{e^{-\e \a k}\eta}{\pi}  \sum_{j=0}^{N-1} \Re{e^{i\phi_j k}\f_\g(\phi_j - i(\e \a + 1), x, v, \tau) \hat e (\e, \phi_j, x,v, \tau)}w_j
\end{align}
with the weight $w_0 = w_{N-1} = 1/3$, $w_j = 4/3$ iff $j$ is odd and $w_j= 2/3$ otherwise.
Since we are interested in strikes near the money the range for the log-strikes $k$ needs to be centred
on the log-price $x$. The strike range is, thus, discretized using $N$ equidistant points
\[
k_u = -b + u\l + x \qquad \text{for } u = 0, \ldots, N-1
\]
where $\l$ is the increment and $b = N\l/2$. This produces log-strikes over the range $[\log S - b, \log S + b - \l]$. For a log-strike on the grid $k_u$ we can write the sum \eq{simpson} as 
\begin{align} \label{eq:FFRT}
&\dfrac{\partial}{\partial \g} E(k_u) \nn \\ 
 &\approx  \dfrac{e^{-\e \a k_u}\eta}{\pi}  \sum_{j=0}^{N-1} \Re{e^{i\phi_jk_u} \f_\g(\phi_j - i(\e \a + 1), x, v, \tau) \hat e (\e, \phi_j, x, v, \tau)}w_j  \nn \\
&= \dfrac{e^{-\e \a k_u}\eta}{\pi}  \sum_{j=0}^{N-1} \Re{e^{ij\eta(-b + u\l + x)} \f_\g(\phi_j - i(\e \a + 1), x, v, \tau) \hat e (\e, \phi_j, x, v, \tau)}w_j  \nn \\
&= \dfrac{e^{-\e \a k_u}\eta}{\pi}  \sum_{j=0}^{N-1} \Re{e^{i\eta\l uj}e^{i(b - x)\phi_j} \f_\g(\phi_j - i(\e \a + 1), x, v, \tau) \hat e (\e, \phi_j, x, v, \tau)}w_j 
\end{align}
for $u = 1, \ldots, N-1$. Applying a discrete Fast Fourier Transform (FFT) on \eq{FFRT} imposes the restriction 
\begin{equation} \label{eq:restriction}
\l \eta = \dfrac{2 \pi}{N}
\end{equation}
on the choice of the increments $\l$ and $\eta$ which entails a trade off between the grid sizes. Hence, Chourdakis \cite{Chourdakis2005} introduced the Fractional Fast Fourier Transform (FRFT) to relax this important limitation of the discrete FFT. The term $2 \pi/N$ in \eq{restriction} is replaced by a general term $\b$ and \eq{FFRT} becomes 
\begin{equation} \label{eq:xform}
\hat x_u  = E(k_u)\approx  \dfrac{e^{-\e \a k_u}\eta}{\pi}  \sum_{j=0}^{N-1} \Re{e^{i\beta uj}x_j}
\end{equation}
with 
\[
x_j = e^{i(b - x)\phi_j}\f_\g(\phi_j - i(\e \a + 1), x, v, \tau)  \hat{e} (\e, \phi_j, x, v, \tau)w_j \quad j = 1, \ldots, N-1 \, .
\]
The relationship between the grid sizes $\l$ and $\eta$ becomes $\l \eta = \b$. Thus, we can
choose the grid size parameters $\eta$ and $\l$ freely. To implement the FRFT on the vector $\vec x = (x_0, \ldots, x_{N-1})$ we first define vectors
\begin{align*}
\vec y &= \left( \left\lbrace e^{-i \pi j^2 \beta}x_j\right\rbrace_{j = 0}^{N-1}, \{ 0 \}^{N-1}_{j = 0} \right) \\
\vec z &= \left( \left\lbrace e^{i \pi j^2 \beta}\right\rbrace_{j = 0}^{N-1}, \left\lbrace e^{i\pi (N-j)^2 \beta}\right\rbrace_{j = 0}^{N-1}\right) \, .
\end{align*}
Next, a discrete FFT of the vectors $\vec y$ and $\vec z$ yields $\hat{\vec y} = D(\vec y)$ and $\hat{\vec z} = D(\vec z)$ and we define the $2N$ dimensional vector 
\[
\hat{\vec h} = \hat{\vec y} \odot \hat{\vec z} = (y_jz_j)_{j=0}^{2N-1} \, .
\]
where $\odot$ denotes pointwise multiplication.

Now, we apply the inverse FFT on $\hat{\vec h} $ and take the pointwise product of the result with the
vector 
\[
e = \left( \left\lbrace e^{-i\pi j^2 \beta}	\right\rbrace_{j = 0}^{N-1}, \{ 0 \}^{N-1}_{j = 0} \right)
\]
to obtain 
\[
\hat{\vec x}  = e \odot D^{-1}( \hat{\vec h} ) =   e \odot D^{-1}( \hat{\vec y} \odot \hat{\vec z}) = 
 e \odot D^{-1} \left( D(\vec y) \odot D(\vec z) \right) \, .
\]
If we truncate the last $N$ elements we obtain the desired vector $\vec x$ of \eq{xform} with length $N$. Hence, the FRFT maps a vector of length $N$ onto another vector of length $N$, even though it uses intermediate vectors of length $2N$.

\section{Fisher Information}
\label{sec:FisherInfo}

\subsection{Introduction}
The outline on the Fisher information and the Cram\'{e}r-Rao inequality follows \cite{Cover2006}. We begin with a few definitions. Let $\{f(x; \vec \Theta) \}, \, \vec \Theta = (\t_1, \ldots, \t_{m} ) \in \mathcal{P} \subset \R^m$, denote an indexed family of densities $f: \mathcal{X} \ra \R_{\geq 0}, \, \int f(x; \vec \t) dx =1$ for all $\vec \Theta \in \mathcal{P}$. Here $\mathcal{P}$ is called the parameter set.

\begin{defn}
An \textit{estimator} for $\vec \Theta$ for sample size $n$ is a function $T: \mathcal{X}^n \ra \mathcal{P}$.
\end{defn}

An estimator is meant to approximate the value of the parameter. It is therefore desirable to have some idea of the goodness of the approximation. We will call the difference $T - \vec \Theta$ the \text{error} of the estimator. The
error is a random variable. 

\begin{defn}
The \textit{bias} of an estimator $T(X_1, X_2, \ldots, X_n)$ for the parameter $\vec \Theta$ is the expected value of the error of the estimator, i.e., the bias is 
\begin{align*}
\mathbb{E}_{\vec \Theta} & \left[  T(x_1, x_2, \ldots, x_n) - \vec \Theta \right] \\
                  &= \int \left( T(x_1, x_2, \ldots, x_n) - \vec \Theta \right) f(x_1; \vec \Theta) \cdots f(x_n; \vec \Theta)  \, dx_1 \cdots dx_n \, .
\end{align*}
The estimator is said to be \textit{unbiased} if the bias is zero for all $\vec \Theta \in \mathcal{P}$ (i.e., the expected value of the estimator is equal to the parameter).
\end{defn}

The bias is the expected value of the error, and the fact that it is zero does not guarantee that the error is low with high probability. We need to look at some loss function of the error; the most commonly chosen loss function is the quadratic form
\begin{align} \label{eq:lossFunction}
\Sigma  &(T(X_1, X_2, \ldots,  X_n) ) = \nn \\
& \mathbb{E} \left[ (T(X_1, X_2, \ldots, X_n) - \vec \Theta)(T(X_1, X_2, \ldots, X_n) - \vec \Theta)^{T} \right] 
\end{align} 
which reduces to the covariance matrix for an unbiased estimator. Recall, covariance matrices are positive semi-definite and we write $A \leq B$ for two positive semi-definite $m \times m$ matrices if $x^{T} A x \leq x^{T} B x$ for all $x \in \R^{m}$. 

\begin{defn}
An estimator $T_1(X_1, X_2, \ldots, X_n)$ is said to \textit{dominate} another estimator $T_2(X_1, X_2, \ldots, X_n)$ if, for all $\vec \Theta$, 
\[
\Sigma(T(X_1, X_2, \ldots, X_n) ) \leq \Sigma(T(X_1, X_2, \ldots, X_n) )   \, .
\]
\end{defn}

The Cram\'{e}r-Rao lower bound gives the minimum quadratic loss of the best unbiased estimator of $\vec \Theta$.
First, we define the score gradient of the distribution $f(x; \vec \Theta)$.

\begin{defn}
The \textit{score gradient} $V $ is a random variable defined by
\[
V = \left( \dfrac{\partial}{\partial \t_1} \log f(X; \vec \Theta), \ldots,  \dfrac{\partial}{\partial \t_m} \log f(X; \vec \Theta) \right)
\]
where $X \sim f(x; \vec \Theta)$.
\end{defn}

The expectation of every entry $V_i$ of the score gradient is 
\begin{align*}
\mathbb{E}V_{i} &= \int \left[\dfrac{\partial}{\partial \t_i} \log f(x; \vec \Theta) \right] f(x; \vec \Theta) \, dx \\
&= \int \dfrac{\frac{\partial}{\partial \t_i} f(x; \vec \Theta)}{f(x; \vec \Theta)} f(x; \vec \Theta) \, dx \\
&= \int \dfrac{\partial}{\partial \t_i} f(x; \vec \Theta) \, dx \\
&= \dfrac{\partial}{\partial \t_i} \underbrace{\int f(x; \vec \Theta) \, dx}_{=1} \\
& = 0 \, .
\end{align*}
Therefore, the covariance matrix of the score gradient $V$ is $\mathbb{E} VV^{T}$. These entries have a special meaning.

\begin{defn}
The \textit{Fisher information matrix} $J(\vec \Theta)$ is the covariance matrix of the score gradient, i.e., 
\[
J(\vec \Theta)_{ij} = \mathbb{E} \left[ \dfrac{\partial}{\partial \theta_i} \log f(x; \vec \Theta)  \dfrac{\partial}{\partial \theta_j} \log f(x; \vec \Theta) \right]
\]
\end{defn}

The following properties of the Fisher information are crucial for the present paper. First, as covariance matrix, the Fisher information matrix is positive semi-definite. Second, information is additive: the information yielded by two independent experiments is the sum of the information from each experiment
separately:
\[
J_{X,Y}(\vec \Theta) = J_X(\vec \Theta) + J_Y(\vec \Theta)
\]
Third, the Fisher information depends of the parametrization of the problem: suppose $\vec \Theta$ and $\vec \Lambda$ are $m$-vectors which parametrize
the estimation problem, and suppose $\vec \Theta$ is a continuously differentiable function of $\vec \Lambda$, then 
\begin{equation} \label{eq:fisherTrafo}
J(\vec \Lambda) = D^{T}J(\vec \Theta(\vec \Lambda)) D
\end{equation}
where the $ij$-th entry of the $m \times m$ Jacobian $D$ is defined by 
\[
D_{ij} = \dfrac{\partial \theta_i}{\partial \lambda_j}
\]
and $D^{T}$ denotes the transpose of $D$. Finally, the significance of 	the Fisher information is shown in the following theorem.

\begin{thm} \label{CramerRao}
(Cram\'{e}r-Rao inequality) The covariance matrix of any unbiased estimator $T(X)$ of the parameter $\vec \Theta$ is lower bounded by
the reciprocal of the Fisher information:
\[
\Sigma(T(X) ) \geq J(\vec \Theta)^{-1} \, .
\]
\end{thm}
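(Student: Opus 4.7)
The plan is to derive the inequality as a consequence of the positive semi-definiteness of the joint covariance matrix of the estimator error and the score gradient, which is the standard multivariate extension of the scalar Cauchy-Schwarz argument.

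First I would exploit unbiasedness: since $\mathbb{E}_{\vec\Theta}[T(X)] = \vec\Theta$, I write this componentwise as $\int T_i(x) f(x;\vec\Theta)\,dx = \theta_i$ and differentiate with respect to $\theta_j$. Interchanging differentiation and integration (assuming the standard regularity conditions on $f$ needed to justify this, which is the usual hidden technical hypothesis) and using the identity $\partial_{\theta_j} f = f \cdot \partial_{\theta_j} \log f$ already exploited above in computing $\mathbb{E}V_j = 0$, I obtain
\begin{equation*}
\mathbb{E}_{\vec\Theta}\bigl[T_i(X)\, V_j\bigr] = \delta_{ij}.
\end{equation*}
Combining this with $\mathbb{E}V_j = 0$ gives the cross-covariance $\mathrm{Cov}(T_i - \theta_i, V_j) = \delta_{ij}$, i.e.\ $\mathrm{Cov}(T - \vec\Theta, V) = I_m$.

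Next I would assemble the $2m \times 2m$ joint covariance matrix of the stacked random vector $(T - \vec\Theta,\, V)$:
\begin{equation*}
M \;=\; \begin{pmatrix} \Sigma(T) & I_m \\ I_m & J(\vec\Theta) \end{pmatrix}.
\end{equation*}
Being a covariance matrix, $M$ is positive semi-definite. Assuming $J(\vec\Theta)$ is invertible (otherwise the bound is vacuous), the Schur complement criterion for positive semi-definiteness of a symmetric block matrix yields
\begin{equation*}
\Sigma(T) \;-\; I_m \, J(\vec\Theta)^{-1} \, I_m \;\geq\; 0,
\end{equation*}
which is precisely $\Sigma(T(X)) \geq J(\vec\Theta)^{-1}$ in the positive semi-definite ordering introduced before the statement. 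Equivalently, one may directly test $M$ against vectors of the form $(a,\, -J^{-1}a)$ to avoid quoting the Schur complement and obtain the same conclusion from $a^T \Sigma(T) a \geq a^T J^{-1} a$ for arbitrary $a \in \mathbb R^m$.

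The only real obstacle is the analytic step of swapping $\partial_{\theta_j}$ with the integral against $f(\cdot;\vec\Theta)$; this requires regularity hypotheses on the family $\{f(\cdot;\vec\Theta)\}$ (dominated convergence, differentiability of $f$ in $\vec\Theta$, integrable dominating function), but these are the standard conditions under which the score gradient and Fisher information are themselves well-defined, so no new assumption beyond what is already implicit in the earlier computation of $\mathbb{E}V_i = 0$ is needed. The block-matrix/Schur-complement step is then purely algebraic.
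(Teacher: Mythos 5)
Your proposal is correct: differentiating the unbiasedness condition to get $\mathbb{E}\left[(T_i-\theta_i)V_j\right]=\delta_{ij}$, then invoking positive semi-definiteness of the joint covariance matrix of $(T-\vec\Theta,\,V)$ and eliminating the off-diagonal identity blocks via the Schur complement (or, equivalently, testing against vectors of the form $(a,\,-J(\vec\Theta)^{-1}a)$) is the standard multivariate Cram\'{e}r--Rao argument, and your handling of the regularity conditions and of invertibility of $J(\vec\Theta)$ is appropriate. Note that the paper itself offers no proof of this theorem: it is quoted as a known result, with the surrounding exposition deferred to the textbook of Cover and Thomas, where essentially the argument you give (in its scalar Cauchy--Schwarz form, of which yours is the standard matrix extension) appears; so your proposal is consistent with the intended, standard approach.
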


The Fisher information is therefore a measure for the amount of \lgans information" about $\vec \Theta$ that is present in the data. It gives a lower
bound on the error in estimating $\vec \Theta$ from the data. 

\subsection{Fisher Information in Pricing Options}
The Fisher information matrix $J(\t_{ij})_{ij \in \set{\sg_0, \k, \sqrt{\t}, \rho, \sg}}$ of a single option with price $e$ is  a way of measuring the amount of information that the observable option price $e$ carries about the unknown parameters $\sg_0, \k, \sqrt{\t}, \rho, \sg$. Recall, $\sg_0$ is the volatility, $\k$ the relaxation parameter of the CIR process \eq{CoxIngersoll}, $\t$ the long-term mean of the variance, $\sg$ the volatility of the diffusion process \eq{CoxIngersoll}, and $\rho$ the leverage parameter, i.e., the instantaneous correlation between the two Brownian motions in \eq{CoxIngersoll}.\\

While Heston's option price $E(\e, x, v, r, \k, \t, \rho, \sg, k,
\tau)$ is a function of the stated parameters, the observed price $e$
might deviate from its theoretical value. Following standard practice
in fitting the volatility smile, we consider the mean squared loss
between the actually observed option price $e$ and its theoretical
value. In statistical terms, this corresponds to a noise model where
the option price $e$ is normally distributed around its theoretical
mean value with variance $\hat v$. The probability function for $e$,
which is also the likelihood function for the parameter vector $\vec
\Theta = (\sg_0, \k, \sqrt{\t}, \rho, \sg)$, is a function $f(e ; \vec
\Theta)$; it is the probability mass (or probability density) of the
random option price $e$ conditional on the value of $\vec \Theta$. The
$i, j$-th entry of the Fisher information matrix is defined as
\begin{equation} \label{eq:FisherInfoDef}
  J(\vec \Theta)_{ij} = \mathbb{E} \left[ \dfrac{\partial}{\partial \theta_i} \log f(x; \vec \Theta)  \dfrac{\partial}{\partial \theta_j} \log f(x; \vec \Theta) \right]
\end{equation}
with $ (\t_1,  \t_2, \ldots, \t_5) = \vec \Theta $. For a certain actual stock price $S$, strike $K$, and maturity $\tau$ we assume
\[
f(e | \vec \Theta) = \dfrac{1}{\sqrt{2 \pi \hat v}} \exp\left({-\dfrac{(e - E(\e, x, v, r, \k, \t, \rho, \sg, k, \tau))^2}{2 \hat v}} \right)
\]
where $x = \log S$, $k = \log K$, and $v = \sg_0^2$. We obtain the $ij$-th entry of the Fisher information matrix \eq{FisherInfoDef}
\begin{align}
  J(\vec \Theta)_{ij} &= \mathbb{E} \left[ \dfrac{\partial}{\partial \theta_i} \left(- \frac{1}{2} \log(2 \hat v) + \dfrac{(e - E(x,  \vec \Theta ,  k, \tau))^2}{2 \hat v}\right) \right. \times \nn \\
  & \qquad \qquad \qquad \left. \dfrac{\partial}{\partial \theta_j} \left(- \frac{1}{2} \log(2 \hat v) + \dfrac{(e - E(x,  \vec \Theta ,  k, \tau))^2}{2 \hat v}\right) \right] \nn \\
  &= \mathbb{E} \left[ \dfrac{e - E(x,  \vec \Theta ,  k, \tau)}{\hat v} \dfrac{\partial}{\partial \theta_i} E(x,  \vec \Theta ,  k, \tau) \right. \times \nn \\ 
  & \qquad \qquad \qquad \left. \dfrac{e - E(x,  \vec \Theta ,  k, \tau)}{\hat v} \dfrac{\partial}{\partial \theta_j} E(x,  \vec \Theta ,  k, \tau) \right] \nn \\
  &= \dfrac{1}{\hat v^2} \mathbb{E} \left[ (e - E(x,  \vec \Theta ,  k, \tau))^2 \right] \dfrac{\d}{\d \t_i} E(\e, x,  \vec \Theta , k, \tau) \dfrac{\d}{\d \t_j}  E(\e, x,  \vec \Theta , k, \tau) \nn \\
  &= \dfrac{1}{\hat v} \dfrac{\d}{\d \t_i} E(\e, x,  \vec \Theta , k, \tau) \dfrac{\d}{\d \t_j}  E(\e, x,  \vec \Theta , k, \tau) \nn \\
  &=  \dfrac{1}{\hat v} \nabla_{\Theta} E \left( \nabla_{\Theta} E \right)^T \, \label{eq:outerProduct}
\end{align}
where 
\[
\nabla_{\vec \Theta} E = \left( \dfrac{\d}{\d \t_i} E(\e, x, v, r, \k, \t, \rho, \sg, k, \tau) \right)_{i=1, \ldots, 5} 
\] 
is the gradient of the option price w.r.t. the parameters $\set{\sg_0, \k, \sqrt{\t}, \rho, \sg}$. \\
Furthermore, we consider deviations of observed prices of options with different strikes and maturities from their respective theoretical prices as independent. Additivity of Fisher information yields
\[
J(\vec \Theta) = \dfrac{1}{\hat v}\sum_{\tau \in \mathcal{T}, k \in \mathcal{K}} J_{\tau, k} (\vec \Theta)
\]
for simultaneously observed prices $\set{E(\e, x, v, r, \k, \t, \rho, \sg, k, \tau): \, \tau \in \mathcal{T}, k \in \mathcal{K} }$ of options with different maturities $\tau$ and log-strikes $k$.

\section{Inferring hidden parameters from a single European option}
\label{sec:Surface}

According to \eq{outerProduct}, Fisher information matrix of a European option with Gaussian Noise centred on the theoretical price \eq{general} provided by Heston's model is entirely determined by the first-order derivative of the option Price w.r.t. to the volatility $\sg_0$, and the parameters $\k, \t, \sg$ and $\rho$, respectively. We study these derivatives: first, their dependency of strikes and maturities; second, we have a closer look on Vega, i.e., the derivative of the option price w.r.t. to volatility $\sg_0$, observing a drop of its value for small volatilities.

\subsection{Greek-Surfaces}

The inverse of the Fisher information matrix $J(\vec \Theta)$ in \eq{outerProduct} provides a lower bound on the covariance matrix of an unbiased estimator $T(E)$ of the parameter $\vec \Theta = (\t_1, \ldots, \t_5) = (\sg_0, \k, \sqrt{\t}, \sg, \rho)$ from a single option price $E$. Hence, the diagonal elements $J(\vec \Theta)^{-1}_{ii}$, with $i = 1,2,\ldots, 5$, of the inverse $J(\vec \Theta)^{-1}$ yield lower bounds of the variances of the estimates $T(E)_i$ of the parameters $\t_i$ from a single observed option price $E$. The diagonal entries $J(\vec \Theta)^{-1}_{ii}$ can be lower bounded by the respective Fisher information $J(\t_i)^{-1}$. Note that $J(\t_i)$ is the information obtained when estimating the parameter $\t_i$ alone, i.e., considering all the other parameters as fixed. With this interpretation in mind the lemma below states that the variance of a joint estimator for all parameters $\vec \Theta$ simultaneously is larger than estimating each parameter $\t_i$ alone.

\begin{lem} \label{singleFit}
We have 
\[
 J^{-1}(\vec \Theta)_{ii} \geq J(\t_i)^{-1} 
\]
for $i = 1, \ldots, 5$.
\end{lem}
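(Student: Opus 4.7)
The plan is to reduce the lemma to two essentially independent observations: an identity relating the diagonal of the full Fisher matrix to the one-parameter Fisher information, and a general positive-definite matrix inequality.

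First, I would observe that directly from the definition of the Fisher information matrix (and the form \eq{outerProduct}), the diagonal entry satisfies
\[
J(\vec\Theta)_{ii} = \mathbb{E}\!\left[\left(\dfrac{\partial}{\partial\theta_i}\log f(x;\vec\Theta)\right)^2\right] = J(\theta_i),
\]
because the right-hand side is exactly the scalar Fisher information for $\theta_i$ when the other coordinates of $\vec\Theta$ are held fixed. So the lemma is equivalent to the purely linear-algebraic claim
\[
J(\vec\Theta)^{-1}_{ii} \;\geq\; \dfrac{1}{J(\vec\Theta)_{ii}}
\]
for any (strictly) positive-definite symmetric matrix. That reduction is the conceptual content; what remains is standard matrix algebra.

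For the matrix inequality, I would permute rows and columns so that the index $i$ comes first and write
\[
J(\vec\Theta) \;=\; \begin{pmatrix} a & b^{T} \\ b & C \end{pmatrix}, \qquad a = J(\vec\Theta)_{ii},
\]
where $C$ is the positive-definite $(m-1)\times(m-1)$ block. The Schur complement formula gives
\[
J(\vec\Theta)^{-1}_{ii} \;=\; \bigl(a - b^{T} C^{-1} b\bigr)^{-1}.
\]
Since $C$ is positive definite, $b^{T} C^{-1} b \geq 0$, so $a - b^{T} C^{-1} b \leq a$, and hence $J(\vec\Theta)^{-1}_{ii} \geq 1/a = 1/J(\vec\Theta)_{ii} = J(\theta_i)^{-1}$, which is exactly the claim. (An equivalent route is Cauchy--Schwarz in the inner product induced by $J(\vec\Theta)$: $1 = (e_i^{T} e_i)^2 \leq (e_i^{T} J\, e_i)(e_i^{T} J^{-1} e_i)$.)

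The only potential obstacle is the degenerate case where $J(\vec\Theta)$ fails to be strictly positive definite; but then some Greek vanishes or the five gradient vectors $\nabla_\Theta E$ become linearly dependent, and both sides of the inequality are infinite (or the statement is interpreted via pseudoinverses), so the conclusion is vacuous. Apart from that edge case, the argument is a one-line application of the Schur complement once the identification $J(\vec\Theta)_{ii} = J(\theta_i)$ is noted.
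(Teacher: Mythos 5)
Your proof is correct and follows essentially the same route as the paper's: identify the diagonal entry $J(\vec\Theta)_{ii}$ with the scalar information $J(\t_i)$, write the matrix in block form with that entry as the leading block, and use the Schur-complement formula for the inverse together with positive semi-definiteness of the remaining block to conclude $J(\vec\Theta)^{-1}_{ii} = \bigl(J(\t_i) - b^{T}C^{-1}b\bigr)^{-1} \geq J(\t_i)^{-1}$. Your write-up is in fact slightly cleaner than the paper's (which omits the inverse on the $\vec A_{22}$ block in the Schur complement and glosses over the degenerate rank-one case), but the underlying argument is the same.
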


\begin{proof}
Let $\vec A$ be an invertible $n \times n$ matrix and let denote $\vec A_{ij}$ the $ij$-th block $\vec A$ for $i,j \in \set{1,2}$, i.e., 
\[ \vec A = \left[
\begin{array}{cc}
\vec A_{11} & \vec A_{12} \\ 
\vec A_{21} & \vec A_{22}
\end{array} \right] \, .
\]
Then, the inverse of $\vec A$ can be expressed as, by the use of 
\begin{align*}
\vec C_1 &= \vec A_{11} - \vec A_{12} \vec A_{22}^{-1} \vec A_{21} \\
\vec C_2 &= \vec A_{22} - \vec A_{21} \vec A_{11}^{-1} \vec A_{12} \, ,
\end{align*} 
as
\[
\vec A^{-1} = \left[ \begin{array}{cc}
\vec A_{11} & \vec A_{12} \\ 
\vec A_{21} & \vec A_{22}
\end{array} \right]^{-1} = \left[ \begin{array}{cc}
\vec C_1^{-1} & - \vec A_{11}^{-1}\vec A_{12} \vec C_2^{-1} \\ 
- \vec C_2^{-1} \vec A_{21} \vec A_{11}^{-1} & \vec C_2^{-1}
\end{array} \right] \, ,
\]
see equation $399$ in \cite{MatrixCook}. Assume $\vec A = J(\vec \Theta)$ and 
\[
\vec A_{11} = J(\vec \Theta)_{11} = \dfrac{1}{\hat v} \left( \dfrac{\partial}{\partial \t_1} E(x, k, \Theta, \tau)\right)^2 \, .
\] 
Then 
\[
J(\vec \Theta)^{-1}_{11} = \left( J(\vec \Theta)_{11} - \vec A_{12} \vec A_{22} \vec A_{21} \right)^{-1}
\]
$J(\vec \Theta)$ is positive semi-definite. This implies $\vec A_{12}^{T} = \vec A_{21}$, $\vec A_{22}$ is positive semi-definite as well and therefore $\vec A_{12} \vec A_{22} \vec A_{21} \geq 0$. This yields the inequality for $i=1$. Relabelling of the parameters yields the inequalities for the cases $i = 2,3,4, 5$ as well.
\end{proof}

According to \eq{outerProduct} we have
\[
J^{-1}(\vec \Theta)_{ii} \geq \hat v  \left (\dfrac{\partial}{\partial \t_i} E \right)^{-2} \quad \text{for } i = 1, \ldots, 5  \, .
\]
Hence, the gradient $\nabla_{\vec \Theta} E$ of an option price \eq{general} does not only entirely determine $J(\vec \Theta)$ but its components also provide first estimates on the uncertainty left about the parameters $\t_i$ from an estimate $T(E)$ derived from a single observation. \\
We computed the gradient for European call options for various strikes and maturities via the FRFT described in the previous section. We implemented the FRFT in \textsc{Haskell} for all our simulations. The choice of parameter values is consistent with the S{\&}P 500 whose parameter set was estimated in \cite{Ait-Sahalia2007} with $\k = 5.07$, $\t = 0.0457$, $\rho = -0.767$, $\sg=0.48$, even though our study, according to its qualitative character, holds true for any reasonable choice of parameters. Furthermore, the values of the time dependent parameters $r, q, v$ and $x$ are in accordance with the data for the  S{\&}P 500 on March 3, 2014. The continuously-compounded zero-coupon interest rate is $r  = 0.167{\%}$ with dividend yield $q = 1.894{\%}$. We assumed the S{\&}P 500 is quoted with $1845.73$ and a volatility $v = 0.0108$ which was fitted on call prices. All data is obtained from the \textsc{Option Metrics} database. Furthermore, the parameters of the FRFT were fixed as $N = 2^{11}$, $\eta = 0.4$, and $\l = 3.6549\text{e}-4$ which yields strike increments of approximately $0.68$ within in a range of $1269.5$ till $2683.5$. The damping factor is $\alpha = 1.5$ throughout the paper.\\

Throughout this section, we do not consider the variance $\hat v$ of the Gaussian Noise. Nevertheless, the figures allow for a comparison of the relative uncertainty of parameter estimates as $\hat v$ just corresponds to a global scaling of the Fisher information. Comparably no information from observing a single call price is gained about the relaxation parameter $\k$, the leverage parameter $\rho$, and the variance of the variance $\sg$. The estimates of the volatility $\sg_0$ and the long-term mean $\sqrt{\t}$ of the volatility are about $100$ times more precise. Interestingly, estimates of $\sg_0$ from call data have an optimal time scale: Vega, the derivative of the call price w.r.t. the volatility attains a global maximum for at the money call options with approximately one month maturity. In general estimates for all parameters are best for at the money call options and uncertainty increases with the distance of the strike from the quoted price of the underlying asset. Due to put-call Parity all these results hold true for put options as well. Hence, in the interest of space their detailed exposure is skipped.

\begin{figure}
\includegraphics[width=0.9\linewidth]{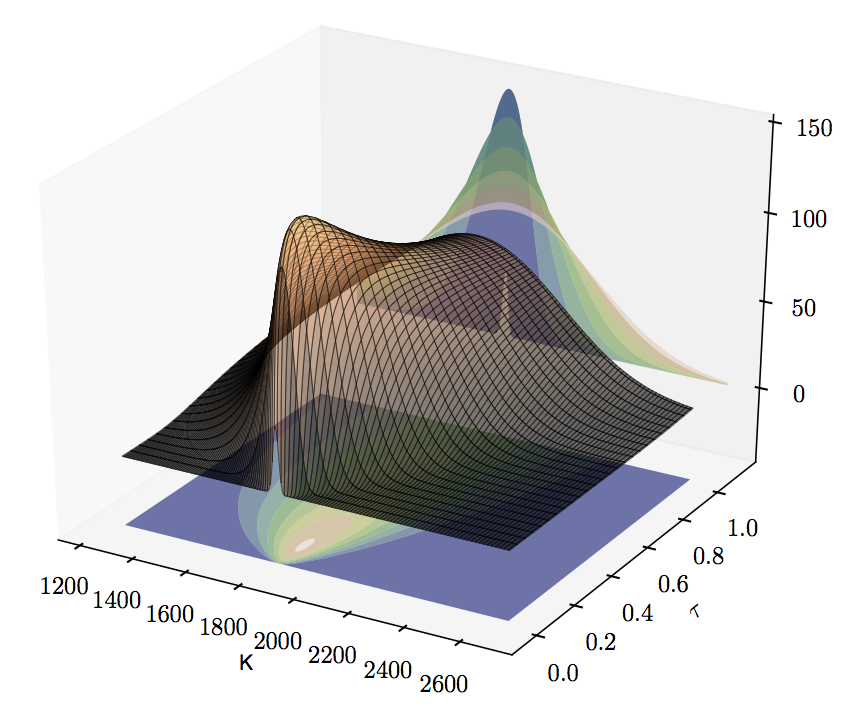}
\caption{The derivative of Heston's call Price w.r.t. the volatility $\sg_0$}
\end{figure}
\begin{figure}
\includegraphics[width=0.9\linewidth]{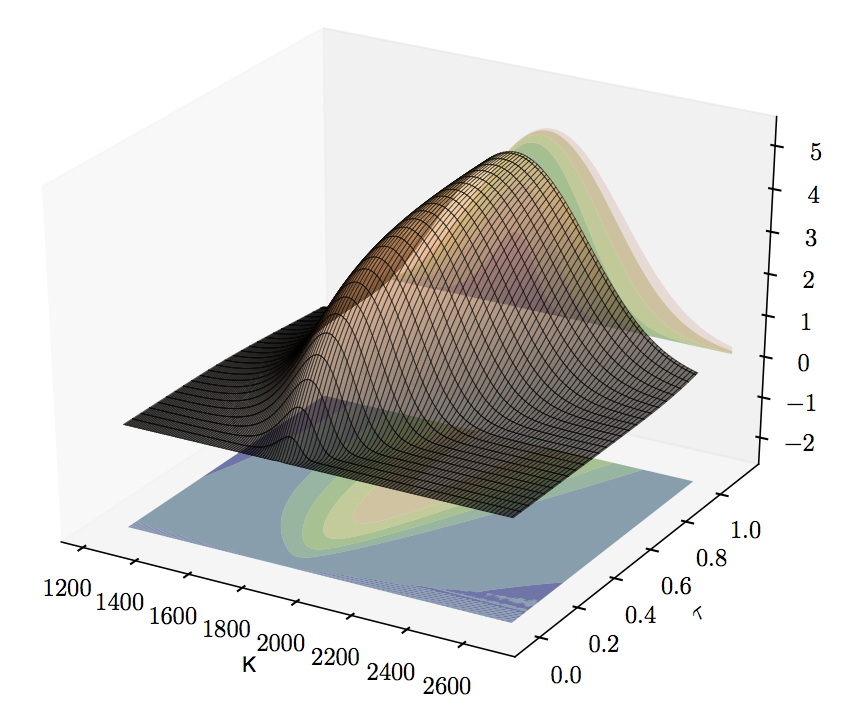}
\includegraphics[width=0.9\linewidth]{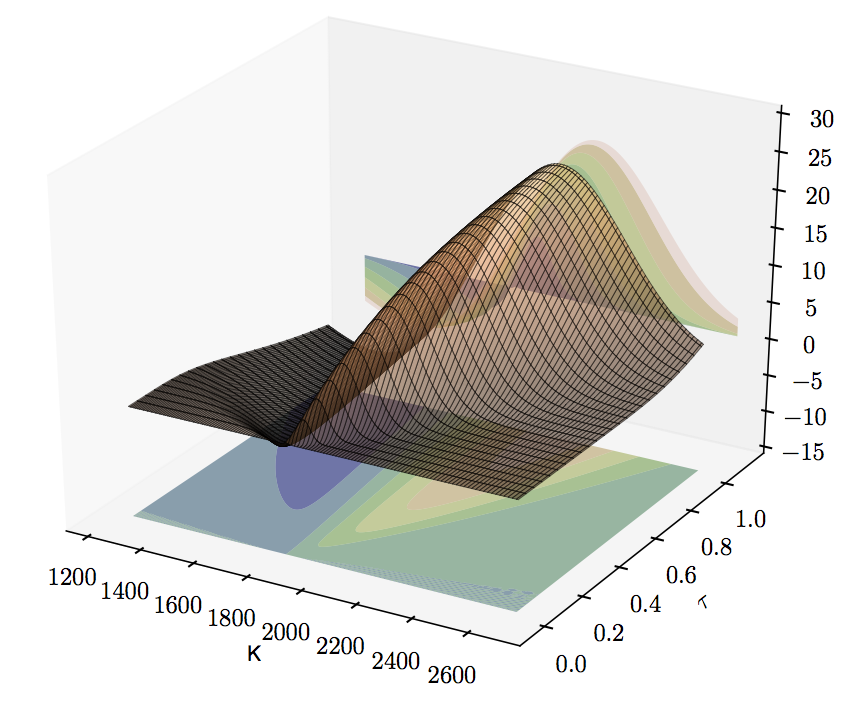}
\caption{The derivatives of Heston's call Price w.r.t. the relaxation parameter $\k$, upper figure, and the 
leverage parameter $\rho$.}
\end{figure}
\begin{figure}
\includegraphics[width=0.9\linewidth]{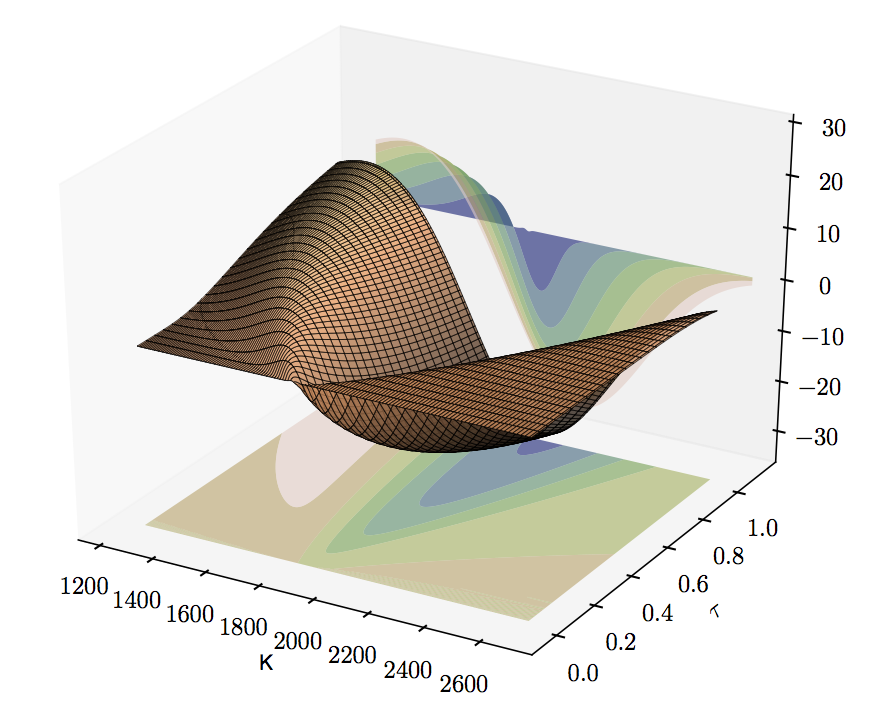}
\includegraphics[width=0.9\linewidth]{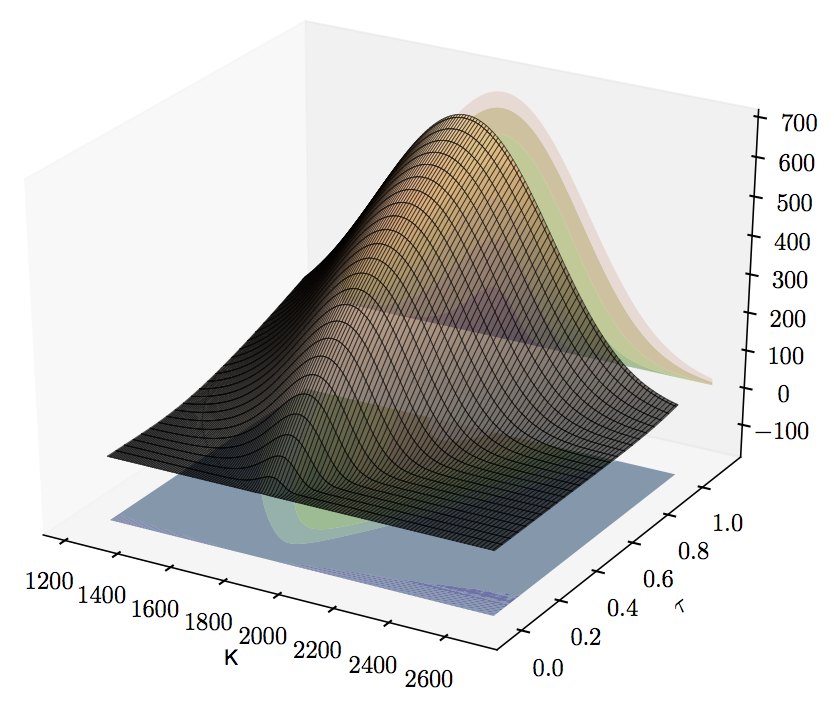}
\caption{The derivatives of Heston's call Price w.r.t. the volatility of the variance process $\sg$ and the
long-term mean of the volatility $\sqrt{\t}$.}
\end{figure}

\subsection{Vega-Drop}

In general parameters are not estimated from a single option price but for various options over an entire time-period. An estimate over a time-period $\mathcal T$ changes the picture in a twofold way.  First, not only maturity and strikes vary but also volatility $\sg_{t}$ and the price of the underlying asset $S_t$. Second, since volatility $\sg_t$ needs to be estimated for every day $t \in \mathcal T$ the parameter vector $\vec \Theta$ is no longer $(\sg_0, \k, \sqrt{\t}, \sg, \rho)$ but 
\begin{equation} \label{eq:paramVec}
(\sg_t)_{t  \in \mathcal{T}} \oplus (\k, \sqrt{\t}, \sg, \rho)
\end{equation}
where $\oplus$ denotes concatenation. Consequently, the Fisher information matrix $J(\vec \Theta)$ is an $(m +4) \times (m + 4) $ matrix where $m = |\mathcal{T}|$ is the number of days we consider. The first $m$ diagonal elements of $J(\vec \Theta)^{-1}$ provide lower error bounds on the estimates of volatilities $(\sg_t)_{\mathcal{T}}$. According to lemma \ref{singleFit} these diagonal elements are lower bounded by
\begin{equation} \label{eq:lower}
\hat v \left( \dfrac{\partial}{\partial \sg_t} E_t \right)^{-2} \quad \text{for } t \in \mathcal{T} \, 
\end{equation}
where $E_t$ is an observed option price at time $t$. Hence, Vega, i.e., the first-order derivative of the option price w.r.t. volatility, plays a crucial role for error estimates for the majority of the parameters in \eq{paramVec}. \\

\begin{figure}[ht] 
\includegraphics[width=1.0\linewidth]{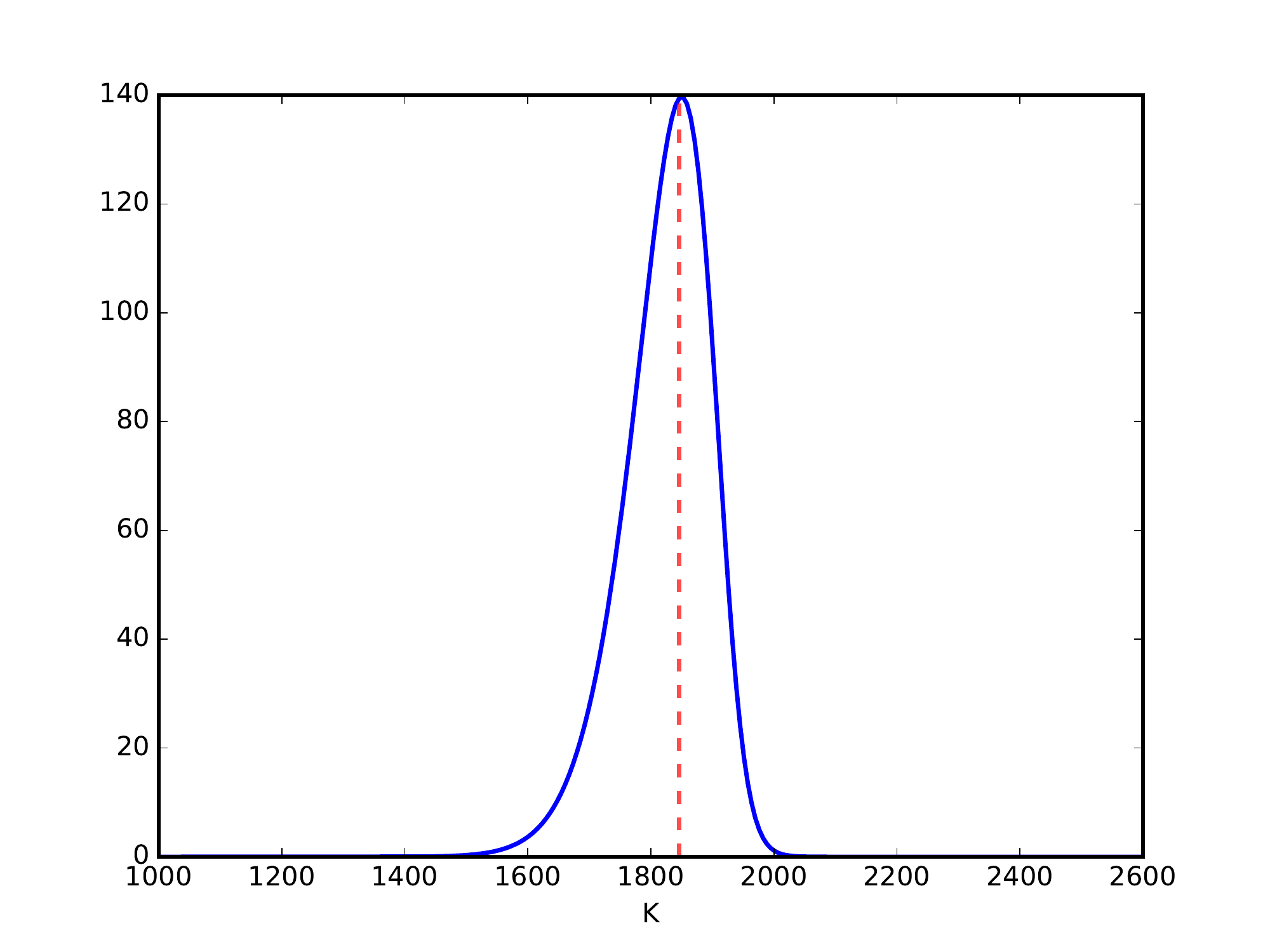}
\caption{\label{fig:vega}Vega for a European option priced by Heston's model over various strikes. It is strictly positive and has a global maximum near the spot price of the underlying asset (indicated by the red dashed line).}
\end{figure}

We plotted Vega of a European option traded on the S{\&}P $500$ in \fig{vega} over the same parameter set we have already applied for the explicit computation of the Greek-surfaces of the previous subsection: spot price $S= 1845.73 $, $r = 0.167 {\%}$, $q = 1.894 {\%}$, $\k = 5.07$, $\theta= 0.0457$, $\sg = 0.48$, $\rho =-0.767$, variance $v = 0.0108$, and maturity $\tau = 30$ days. As in the case of the classical Black Scholes Model, Vega is always positive, i.e., the value of an option increases with volatility, and Vega attains a maximum near the spot price of the underlying. Of greater importance is the dependency of Vega on the variance $v$ if the strike is fixed. 

\begin{figure}[ht]
\includegraphics[width=1.0\linewidth]{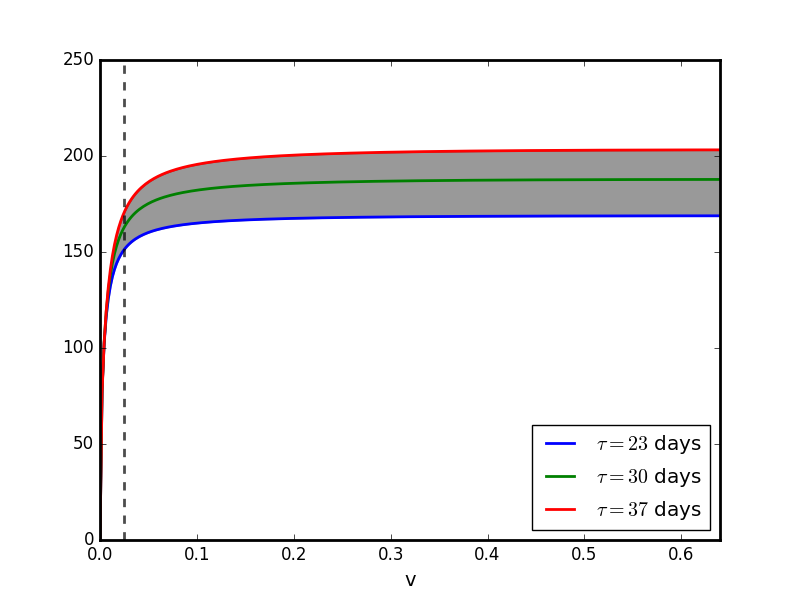}
\caption{ \label{fig:vegaDrop} Vega for a Eurpean option price by Heston's model over various variances for a fixed strike $K = 1845.73$ at the spot price of the underlying. Vega drops for small variances. The dashed line is at $v = 0.025$.}
\end{figure}

\fig{vegaDrop} plots Vega in Heston's model for an at-the-money option over various variances for three different maturities: $23$, $30$, and $37$ days; the range of maturities of the components of the VIX which is studied in the subsequent section. The curves for different maturities $\tau \in [23, 37]$ are located in the shaded region. One observes a dramatic drop of Vega for small variances $v$ which is indicated by the dashed line located at $v = 0.025$. According to \eq{lower} this fact sheds light on variance, and volatility estimates, respectively, from European options: if Vega decreases dramatically the error in estimating the volatility from options becomes large. In the subsequent section we show that this observed Vega-Drop is not only of theoretical but also practical interest if we investigate the quality of the VIX as a volatility proxy.

\section{Fisher Information and the VIX}
\label{sec:VIX}

The S{\&}P 500 is a major stock index which is calculated using the prices of approximately 500 component stocks of the biggest companies in the US. The S{\&}P 500, like other indices, employs rules that govern the selection of the component securities and a formula to calculate index values. The VIX index measures 30-day expected volatility of the S{\&}P 500 index and is comprised by options rather than stocks, with the price of each option reflecting the market's expectation of future volatility. Like conventional indices, the VIX calculation employs rules for selecting component options and a formula to calculate index values.  Roughly, the selection filters near- and next-term put and call options with more than 23 days and less than 37 days to expiration and non-vanishing bid -- see \cite{WhitePaper} for further details. With the aid of the \textsc{option Metrics} database we are able to replicate the part of the portfolio\footnote{The new VIX is computed based on S\&P 500 options which strike dates are renewed in monthly intervals. In addition, weekly options, traded under a different ticker, are used to cover the remaining weeks. These later options are not available in our data base.} of options entering the calculation of the VIX from October 16th 2003 until August 31th 2015. Assuming that all prices of the VIX components are centred on the theoretical price \eq{carr} in Heston's model disturbed by additive Gaussian Noise we address two questions to the data: first, how reliably can volatility be inferred from the VIX components; second, does the VIX really measure 30-day expected volatility if one considers the VIX index as a 30-day variance swap traded on the S{\&}P 500? We tackle both questions in terms of Fisher information.

\subsection{Inferring volatility}

We denote with $\mathcal{E}_t = \mathcal{C}_t \cup \mathcal{P}_t$ the set of all components of the VIX at day $t$ where $\mathcal{C}_t$ denotes the set of call options and $\mathcal{P}_t$ the set of put options, respectively. We introduce the set $\mathcal{T} = \set{16/10/2003, 17/10/2003, \ldots, 31/08/2015}$ of all trading days from October 16th 2003 until August 31th 2015. We assume that the price $e_t$ of an option $e$ in the set $\mathcal{E}_t$ is normally distributed 
\begin{align*}
p(e_t|\e, x_t, & v_t, r_t, \k, \t, \rho, \sg, k_e, \tau)  \\
&= \dfrac{1}{ \sqrt{2\pi \hat v}} \exp \left(  \dfrac{(e_t - E(\e, x_t, v_t, r_t, \k, \t, \rho, \sg, k_e, \tau))}{2 \hat v} \right)
\end{align*} 
with a fixed variance $\hat v$ and mean $E(\e, x_t, v_t, r_t, \k_e , \t, \rho, \sg, k, \tau)$ provided by \eq{general} where $x_t$ is the log-closing-price of the S{\&}P 500 at day $t$, $r_t$ the zero Coupon Bond yield at day $t$ with maturity $\tau$, and $k_e$ the log-strike of the option. Beside the option prices, also the daily zero Coupon Bond yield $r$ and the closing prices of the S{\&}P 500 are provided by the \textsc{Option Metrics} database. Furthermore we adopt the estimate in \cite{Ait-Sahalia2007} for the parameters of the stochastic volatility process \eq{CoxIngersoll}.
\begin{equation} \label{eq:stochParam}
\k = 5.07, \, \t = 0.0457, \sg = 0.48, \rho = -0.767
\end{equation}
We fit the volatility with daily time-resolution on call data via a maximum likelihood estimate on the call option prices. That is, for every day $t \in \mathcal{T}$, the variance $v_t$, and therefore volatility $\sg_t = \sqrt{v_t}$, is determined as the maximum of the joint, negative log-likelihood function
\[
- \sum_{c \in \mathcal{C}_t} \log \left(  p(c_t |  x_t, v_t, r_t, \k, \t, \rho, \sg, k_c, \tau) \right)
\]
which is equivalent to minimize the square-error
\[
\sum_{c \in \mathcal{C}_t} || c_t - C( x_t, v_t, r_t, \k, \t, \rho, \sg, k_c, \tau) ||^2 \, .
\]
This procedure yields a time series $(v_t)_{t \in \mathcal{T}}$ for the variance. Besides, the variance $\hat v$ is obtained from maximizing the negative, joint log-likelihood function
\[
\sum_{t \in \mathcal{T}} \sum_{c \in \mathcal{C}_t} \log \left(  p(c_t | x_t, v_t, r_t, \k, \t, \rho, \sg, k_c, \tau) \right)
\]
which yields the expression
\[
\hat v = \dfrac{1}{N} \sum_{t \in \mathcal{T}} \sum_{c \in \mathcal{C}_t} || c_t - C( x, v_t, r, \k, \t, \rho, \sg, k_c, \tau) ||^2
\]
where $N$ is the cardinality of the union $\bigcup_{t \in \mathcal{T}} \mathcal{C}_t$. For our data set of call options from October 16th 2003 until August 31th 2015 we obtain the value
\[
\hat v = 0.2952 \, .
\]
We have assembled all necessary ingredients to tackle the following thought problem: how much information is present in the VIX components about the unknown volatility process $(\sg_t)_{t \in \mathcal{T}}$, with $\sg_t = \sqrt{v_t}$, and the parameters $ \k, \sqrt{\t}, \sg$ and $\rho$? More precisely, Suppose the parameter vector 
\begin{equation} \label{eq:param}
\vec \Theta = (\sg_t)_{t \in \mathcal{T}}\oplus (\k, \sqrt{\t}, \sg, \rho)
\end{equation} 
and the joint log-likelihood function
\[
\sum_{t \in \mathcal{T}} \sum_{e \in \mathcal{E}_t} \log \left(  p(e_t | \e, x_t, v_t, r_t, \k, \t, \rho, \sg, k_e, \tau) \right) \, .
\]
What is the lower bound on the loss function \eq{lossFunction} of an estimator of $\vec \Theta$ provided all VIX components $\bigcup_{t \in \mathcal{T}} \mathcal{E}_t$ from  October 16th 2003 until August 31th 2015 as data? According to the Cram\'{e}r-Rao inequality the inverse of the Fisher information matrix $J(\vec \Theta)$ yields the answer. $J(\vec \Theta)$ adopts in the present context the block structure
\begin{equation} \label{eq:fisherVIX}
J(\vec \Theta)  =\dfrac{1}{\hat v} \left( \begin{array}{cc}
\vec A_{11} & \vec A_{12} \\ 
\vec A_{21} & \vec A_{22}
\end{array} \right) \, .
\end{equation}
$\vec A_{11} = (a_{ij})$  is an $m \times m$ diagonal matrix, where $m$ is the cardinality of $\mathcal{T}$ s.t.
\[
a_{ii} = \left(\partial_{\sg_{t_i}} E_i  \right)^2 =  4 v_{t_i} \left(  \partial_{v_{t_i}}  E_i \right)^2 \, 	
\]
where we write 
\begin{equation} \label{eq:Ei}
E_i = \sum_{e \in \mathcal{E}_{t_i}} E(\e, x_{t_i}, v_{t_i}, r_{t_i}, \k, \t, \rho, \sg, k_e, \tau) \, .
\end{equation}
$\vec A_{12}$ is an $m \times 4$ matrix,
\[
\vec A_{12} = \left( \begin{array}{cccc}
\partial_{\sg_{t_0}} E_0  \partial_{\k} E_0 & \partial_{\sg_{t_0}} E_0  \partial_{\sqrt{\t}} E_0 & \partial_{\sg_{t_0}} E_0  \partial_{\sg} E_0 & \partial_{\sg_{t_0}} E_0  \partial_{\rho} E_0 \\ 
\vdots & \vdots & \vdots & \vdots \\ 
\partial_{\sg_{t_m}} E_m  \partial_{\k} E_m & \partial_{\sg_{t_m}} E_m  \partial_{\sqrt{\t}} E_m & \partial_{\sg_{t_m}} E_m  \partial_{\sg} E_m & \partial_{\sg_{t_m}} E_m  \partial_{\rho} E_m
\end{array}  \right) \, , 
\]
and $\vec A_{21} = \vec A_{12}^{T}$. Finally, $\vec  A_{22}$ is the $4 \times 4$ matrix 
\[
\vec  A_{22} = \sum_{i = 1}^{m} \nabla E_i \nabla E_i^{T}
\]
where $\nabla E_i = (\partial_{\k} E_i, \partial_{\t} E_i, \partial_{\sg} E_i, \partial_{\rho} E_i)^{T}$ denotes the column gradient vector of $E_i$ w.r.t. the parameters of the stochastic process \eq{CoxIngersoll}. If we assume that the previously fitted variance time-series $(v_t)_{t \in \mathcal{T}}$ and the parameter set \eq{stochParam} provide an estimator of $\vec \Theta$, we can evaluate all the derivatives for computing the Fisher information matrix $J(\vec \Theta)$ \eq{fisherVIX} whose inverse yields a lower bound on the loss function of this estimator.
\begin{figure}[ht]
\includegraphics[width=1.0\linewidth]{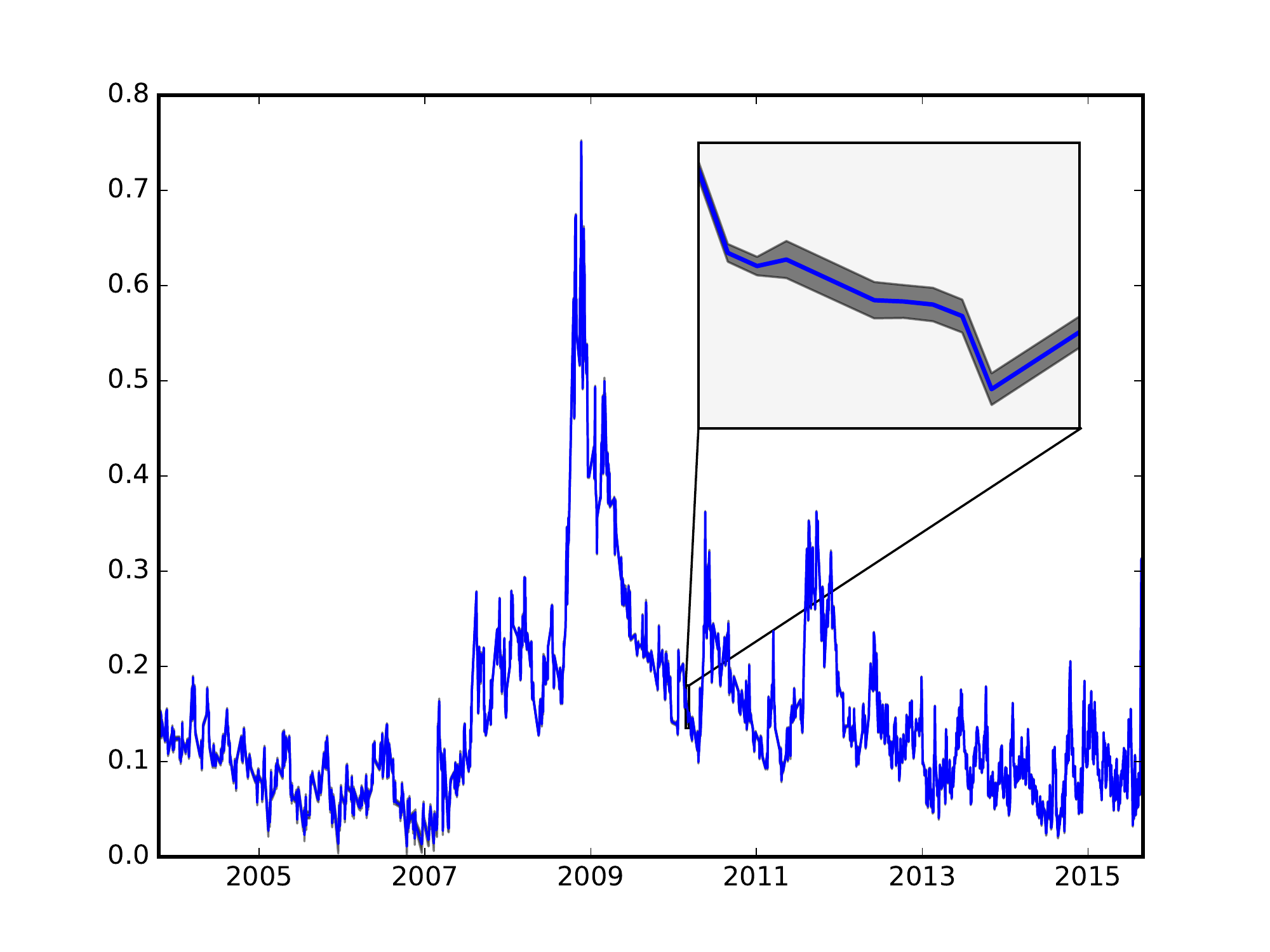}
\caption{\label{fig:instVol} Volatility of the S{\&}P 500. The close-up presents the
volatility estimate along with its double standard deviation obtained from Fisher information matrix \eq{fisherVIX}.}
\end{figure}
\fig{instVol} presents the volatility time-series $(\sg_t)_{t \in \mathcal{T}}$ obtained from the previously described fit on components entering the calculation of the VIX along with the uncertainty left. \\
Since the uncertainty left is negligible the shaded tube  $([\sg_t - \b_t, \sg_t + \b_t])_{t\in \mathcal{T}}$ (95{\%} credibility region) mantling the plot of the volatility time-series $(\sg_t)_{t \in \mathcal{T}}$ is only visible in a close-up presented in \fig{instVol} as well. The close-up shows volatility from February 23rd 2010 until March 10th 2010 within a range from $0.135$ till $0.180$. The boundaries $(\b_t)_{t \in \mathcal{T}}$ of the credibility region are obtained from the first $m$ entries of the diagonal of the inverse $J(\vec \Theta)^{-1} = (J(\vec \Theta)^{-1}_{ij})$ of the Fisher information matrix (Recall, $m$ is the number of trading days we consider, i.e. the cardinality of the set $\mathcal{T}$):
\[
\b_{t_i}  = 2 \sqrt{J(\vec \Theta)^{-1}_{ii}} \quad \text{for all } i = 1, \ldots, m \, .
\]
That is, in terms of the Cram\'{e}r-Rao inequality, $\b_t$ represents a lower bound on the double standard deviation of the volatility estimate $\sg_t$. On average the double standard
deviation is
\[
\bar{\b} = \dfrac{1}{m} \sum_{i=0}^{m} \b_{t_i} = 0.0014 \, .
\]
The last four entries of the diagonal of $J(\vec \Theta)^{-1} = (J(\vec \Theta)^{-1}_{ij})$ yield a lower bound on the variance of the estimates of the parameters $\k, \sqrt{\t}, \sg$ and $\rho$, respectively. Along with the parameter estimates of \cite{Ait-Sahalia2007}	 we obtain table \ref{ErrTab}. Fitting the parameters $\k, \sqrt{\t}, \sg$ and $\rho$ on options over a sufficiently long time-window yields fairly accurate estimates of them as well.\\

\begin{table}[ht]
\begin{center}
\begin{tabular}{ccc}
 & Estimate & Standard Error \\ 
$\k$ & $5.07$ & $4.0e-2$ \\ 
$\sqrt{\t}$ &  $0.214$ & $6.5e-4$ \\ 
$\sg$ & $0.48$ & $9.4e-4$ \\ 
$\rho$ & $-0.767$ & $7.5e-4$ \\ 
\end{tabular}
\caption{ \label{ErrTab} The parameters of Heston's model along with their standard errors obtained from the Fisher information matrix if we assume they were estimated from the components of the VIX between October 16th 2003 until August 31th 2015.}
\end{center}
\end{table} 
Overall, estimates of hidden volatility and the parameters determining the stochastic process \eq{CoxIngersoll} from option data appear reliable. Doubts are shed on these results if relative errors are considered instead of absolute ones. \fig{relErr} presents the relative uncertainty left, that is, the time-series $(\b_t/\sg_t)_{t \in \mathcal{T}}$.
\begin{figure}[ht]
\includegraphics[width=1.0\linewidth]{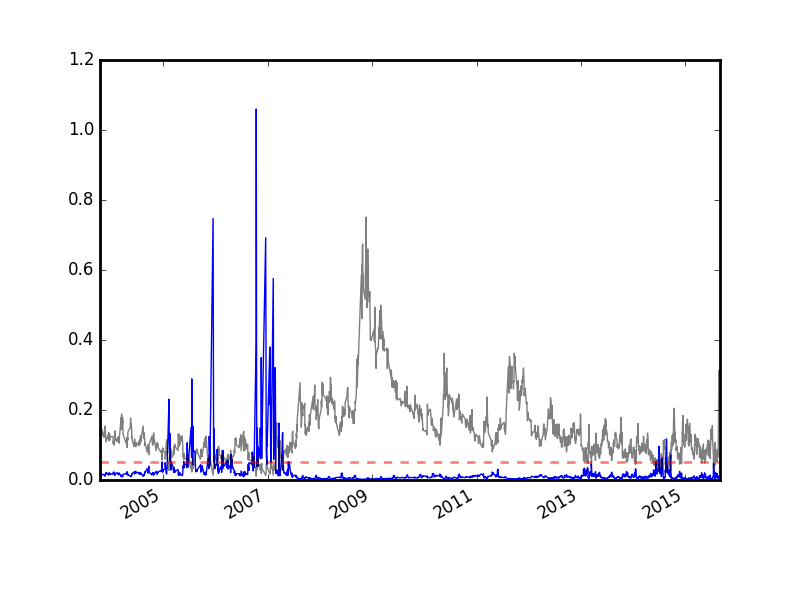}
\caption{\label{fig:relErr} The relative error of the volatility estimate (blue line) and
volatility itself (transparent black line). If volatility drops below a certain
level (dashed, transparent red line), option data yield no information about volatility.}
\end{figure}
Apparently, the uncertainty becomes overwhelming, if the volatility drops. This is in accordance with the discussion in the previous section where we observed that Vega drops if volatility falls below a critical value. From the block structure \eq{fisherVIX} of the Fisher information matrix and the proof of lemma \ref{singleFit} follows
\[
J(\vec \Theta)^{-1}_{ii} \geq \dfrac{\hat v}{(\partial_{\sg_{t_i}} E_i)^2}  
\]
with $E_i$ provided by \eq{Ei}, and therefore
\[
\b_{t_i} \geq \dfrac{2 \sqrt{\hat v}}{\partial_{\sg_{t_i}} E_i} 
\]
for all $i = 1, \ldots, m$. Hence, there is a lower bound of the error of the  volatility $\sg_t$ proportional to the inverse of a sum of Vegas. If this sum shrinks the error becomes large.

\subsection{The VIX as variance swap}

The VIX is quoted in percentage points and translates, roughly, to the expected 
movement (with the assumption of a 68{\%} likelihood, i.e., one standard deviation) in 
the S{\&}P 500 index over the next 30-day period, which is then annualized. According to Carr and Wu \cite{Carr2004} the VIX can also be considered as $30$ day variance swap on the S{\&}P 500. Recall \eq{CoxIngersoll} that the variance of the Heston model is driven by the CIR process
\[
dv_t = \k(\t - v_t) dt + \sg \sqrt{v_t} dW_{t} 
\]
and consequently, that the expected value of $v_t$ conditional on $v_s$ $(s < t)$ is
\[
\mathbb{E}[v_t | v_s ] = v_s e^{-\k(t-s)} + \theta \left(1 - e^{-\k(t-s)} \right) = (v_s - \t) e^{-\k(t-s)} + \t \, .
\] 
In the sequel, we make use of $\mathbb{E}[v_t | v_s ]$ but with $s = 0$. It is useful to denote this quantity as $\hat v_t$
\[
\hat v_t = \mathbb{E}[v_t | v_0 ] = (v_0 - \t) e^{-\k t} + \t \, .
\]
It is also useful to define the total (integrated) variance $\hat w_t$ as 
\[
\hat w_t = \int_{0}^{t} v_s \, ds = (v_0 - \t) \dfrac{1 - e^{-\k t}}{\k}	+ \t t \, .
\]
As explained by Gatheral \cite{GatheralVolSur}, a variance swap requires an estimate of the future variance over the $(0, T)$ time period. This can be obtained as the conditional expectation of the integrated variance. A fair estimate of the total variance is therefore
\begin{align*}
\mathbb{E} \left[ \int_0^{T} v_t dt | v_0 \right] &= \int_0^{T} \mathbb{E} [v_t|v_0] \, dt \\
&=  \int_0^{T} (v_0 - \t) e^{-\k t} + \t \, dt = (v_0 - \t) \dfrac{1 - e^{-\k T}}{\k}	+ \t T
\end{align*}
which is simply $\hat w_T$. Since this represents the total variance over $(0, T)$, it must be scaled by $T$ in order to represent a fair estimate of annual variance (assuming that $T$ is expressed in years). Hence, the strike variance $K_{\text{var}}^{2}$ for a variance swap is obtained by dividing this last expression by $T$
\[
K_{\text{var}}^{2} = (v_0 - \t) \dfrac{1 - e^{-\k T}}{\k T} 	+ \t \, .
\]
Returning to Carr's and Wu's interpretation of the VIX \cite{Carr2004} the VIX-time series $(\text{VIX}_t)_{t \in \mathcal T}$ is the time-series 
\[
K_{\text{var}, t} = \sqrt{(v_t - \t) \dfrac{1 - e^{-\k T}}{\k T} 	+ \t} \quad \text{for } t \in \mathcal T
\]
where $(v_t)_{t \in \mathcal{T}}$ denotes the  variance of the S{\&}P 500 at day $t$ and $T = 30/365$. Thus, Fisher information provides an uncertainty estimate on the VIX, considered as $30$ days variance swap, estimated from its components, i.e., near- and next-term put and call options with more than 23 days and less than 37 days to expiration and non-vanishing bid. We only have to transform the Fisher information matrix already computed in the previous subsection according to the rules \eq{fisherTrafo}. In terms of \eq{fisherTrafo} in section 2.1 we have
\begin{align*}
\vec \Lambda &= (K_{\text{var}, t})_{t \in \mathcal{T}} \oplus (\k, \sqrt{\t}, \sg, \rho) \\
\vec \Theta &= (\sg_t)_{t \in \mathcal{T}}\oplus (\k, \sqrt{\t}, \sg, \rho) \, 
\end{align*}
and 
\[
J(\vec \Lambda) = D^T J(\vec \Theta) D \, .
\]
Hence, the transformation matrix $D$ in \eq{fisherTrafo} adopts the block form
\[
D = \left( \begin{array}{cc}
\vec A_{11} & \vec A_{12} \\ 
\vec A_{21} & \vec A_{22}
\end{array} \right) \, .
\]
$\vec A_{11}$ is an $m \times m$ diagonal matrix with entries
\begin{align*}
a_{ii} &= \dfrac{\partial \sg_{t_i}}{\partial K_{\text{var}, t_i}} = \dfrac{\partial \sg_{t_i}}{\partial v_{t_i}}\dfrac{\partial v_{t_i}}{\partial K_{\text{var}, t_i}} \\
&= \dfrac{1}{2 \sg_{t_i}} \dfrac{\partial }{\partial K_{\text{var}, t_i}} \left(  \dfrac{(K_{\text{var}, t_i}^{2} - \t) \k T }{1 - e^{-\k T}} + \t \right) 
= \dfrac{1}{\sg_{t_i}} \dfrac{K_{\text{var}, t_i} \k T}{1 - e^{-\k T}}
\end{align*}
for $i = 1, \ldots, m$ where $m$ is the number of trading days $t_i \in \mathcal{T}$  from October 16th 2003 until August 31th 2015. $\vec A_{12}$ is an $m \times 4$-matrix with entries
\[
\left( \begin{array}{cccc}
\dfrac{\partial \sg_{t_0}}{ \partial \k} & \dfrac{\partial \sg_{t_0}}{ \partial \sqrt{\t}}  & 0 & 0 \\ 
\vdots & \vdots & \vdots & \vdots \\ 
\dfrac{\partial \sg_{t_m}}{ \partial \k} & \dfrac{\partial \sg_{t_m}}{ \partial \sqrt{\t}} & 0 & 0
\end{array} \right)
\]
where
\begin{align*}
\dfrac{\partial \sg_{t_i}}{ \partial \k} &= \dfrac{\partial \sg_{t_i}}{\partial v_{t_i}}\dfrac{\partial v_{t_i}}{\partial \k} = \dfrac{(K_{\text{var}, t_i}^2 - \t)T}{2 \sg_{t_i}} \dfrac{\partial}{\partial \k} \left( \dfrac{\k}{1 - e^{-\k T}} \right) \\
&= \dfrac{(K_{\text{var}, t_i}^2 - \t)T}{2 \sg_{t_i}} \dfrac{1 - (1 + \k T)e^{-\k T}}{\left(1 - e^{-\k T}\right)^2}\\
\dfrac{\partial \sg_{t_i}}{ \partial \sqrt{\t}} &= \dfrac{\partial \sg_{t_i}}{\partial v_{t_i}} \dfrac{\partial \t}{\partial \sqrt{\t}} \dfrac{\partial v_{t_i}}{\partial \t} = \dfrac{\sqrt{\t}}{\sg_{t_i}} \left( 1 - \dfrac{\k T}{1 - e^{-\k T}} \right) \, .
\end{align*}
Finally, $\vec A_{21} = \vec A_{12}^T$ and $\vec A_{22} $ is the $4 \times 4$ identity matrix. Similar to figure \fig{instVol} we plot \fig{vixTube} the time-series $(K_{\text{var}, t})_{t \in \mathcal{T}}$ along with its $95{\%}$ credibility region $[K_{\text{var}, t}- \b_t, K_{\text{var}, t} + \b_t]$ where
\[
\b_{t_i}  = 2 \sqrt{J(\vec \Lambda)^{-1}_{ii}} \quad \text{for } i = 1, \ldots, m \, .
\]
Furthermore, the historical VIX is plotted. The value $K_{\text{var}, t}$ is systematically smaller than the realized VIX. 

\begin{figure}[h]
\includegraphics[width=1.0\linewidth]{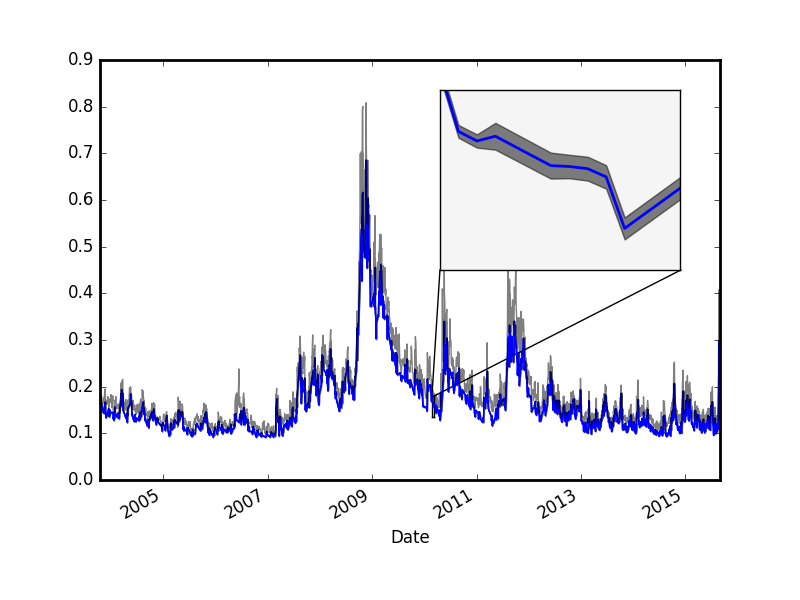}
\caption{\label{fig:vixTube} The time series $(K_{\text{var}, t})_{t \in \mathcal{T}}$. The close-up shows the graph along with its $95{\%}$ credibility region. The Black transparent line shows the historical VIX.}
\end{figure}
\begin{figure}[h]
\includegraphics[width=1.0\linewidth]{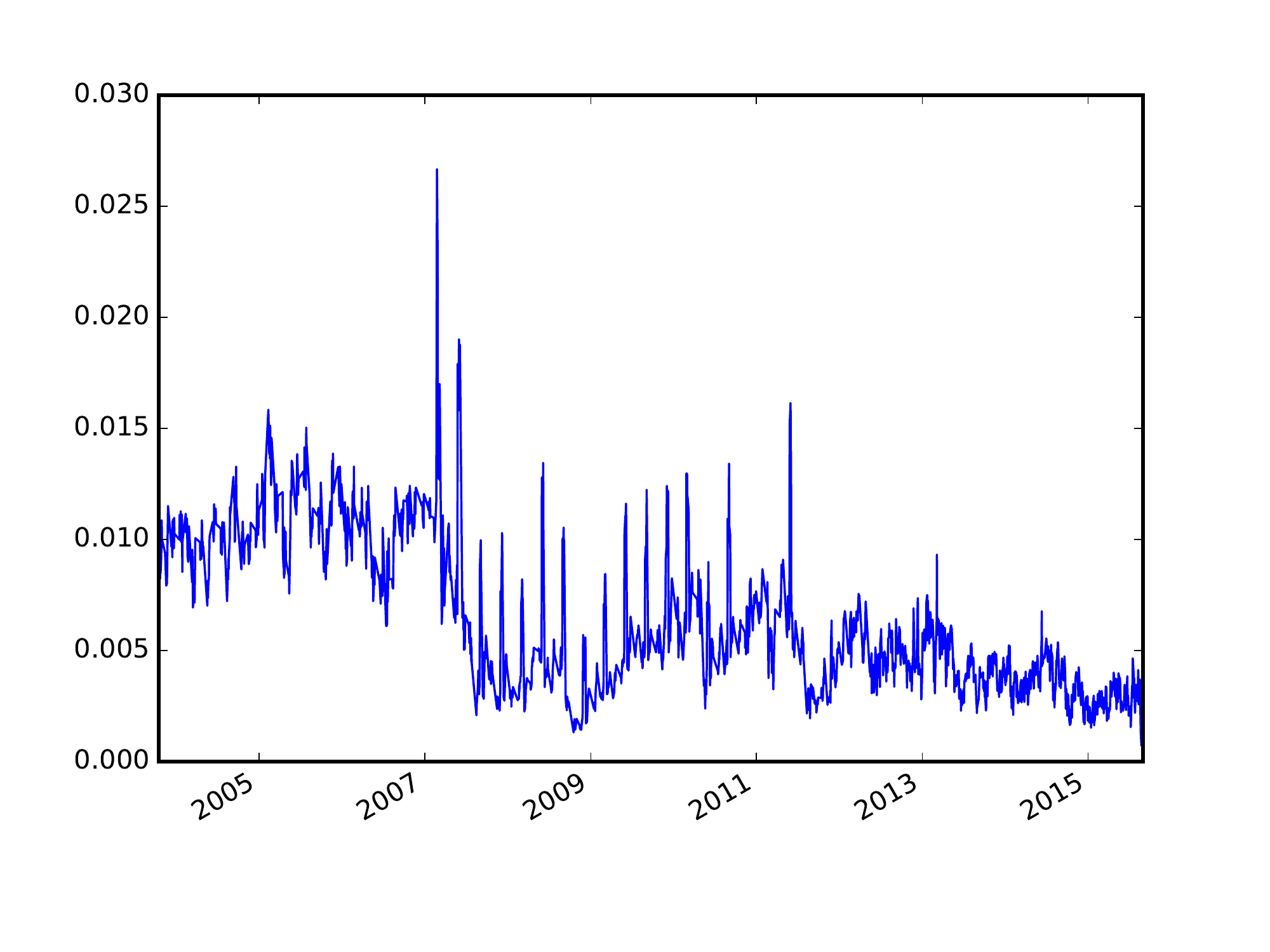}
\caption{\label{fig:vixRel} The time series $(\b_t / K_{\text{var}, t})_{t \in \mathcal{T}}$.}
\end{figure}
As for the  volatility the error is negligible and on average we have
\[
\bar{\b} = \dfrac{1}{m} \sum_{i=0}^{m} \b_{t_i} = 8.9e-04 \, .
\]
Compared to the  volatility estimates from options there is big difference if we consider the relative error, that is, the time-series $\b_t / K_{\text{var}, t}$ for $t \in \mathcal{T}$ in \fig{vixRel}. The relative error never exceeds $3 {\%}$.

\section{Conclusion}

Here, we have addressed the question of how reliably
volatility can be estimated from option price data. To this end, we
computed the Fisher information matrix of Heston's stochastic
volatility model. Thanks to the analytic tractability of Heston's
model, the Fisher information can be expressed in Fourier integrals
giving the Heston Greeks.

Our investigations lead to the following insights: First, options at
the money are most informative about  volatility while
almost no information can be obtained from options that are far out of
the money. Second, low  volatilities are hard to estimate
as Vega almost vanishes in this case, making it impossible to extract
information from option prices. Third, volatility estimation from
market data as exemplified on S\&P 500 index options is reliable most
of the time, with occasional large relative errors for very low
volatilities. We might speculate that this could lead to overconfident
estimates of protfolio risk especially in times of calm financial
markets.  Nevertheless, the VIX index itself, reflecting the average
volatility over the next month, proves to be an accurate accessment of
 volatility.

Overall, this work complements our previous findings \cite{PfanBerti2016} regarding the low
information content of stock returns. There, we showed that in general at least secondly
quoted return data is necessary to infer volatility successfully. We guessed that option price data
could lead to much more reliable volatility estimates as confirmed by the analysis presented here.

\section{Funding}

Nils Bertschinger and Oliver Pfante thank Dr. h.c. Maucher for funding their positions.

\bibliographystyle{abbrv}
\bibliography{Heston}
\end{document}